\documentclass[lettersize,journal]{IEEEtran}
\IEEEoverridecommandlockouts
\usepackage{amsmath,amssymb,amsthm,epsfig,color,empheq,graphicx,graphics,pgfplots,balance,mathtools}
\usepackage{enumerate,url,algorithm,algorithmic,wasysym,epstopdf,enumitem}
\usepackage{multirow} 
\usepackage{multicol}\usetikzlibrary{decorations.pathreplacing}
\usepackage{xcolor}
\usepackage{cuted}
\usepackage{cite}
\usepackage{amsfonts}
\usepackage{dsfont}
\usepackage{float}
\usepackage{dblfloatfix}
\usepackage{xspace}
\usepackage[font=footnotesize]{caption}
\usepackage{subcaption}
\usepackage{mathrsfs}  
\usepackage{amsmath}
\usepackage{soul}
\usepackage{multirow}
\usepackage{amsmath}
\usepackage{booktabs}
\DeclareMathAlphabet{\mathscrbf}{OMS}{mdugm}{b}{n}
\pgfplotsset{compat=1.18}

\newtheorem{theorem}{Theorem}

\begin{document}
\newcommand{\MATLAB}{\textsc{Matlab}\xspace}
\newcommand{\real}{\mathscr{R}_{\text{\footnotesize$e$}}}
\newcommand{\imag}{\mathscr{I}_{\text{\footnotesize$m$}}}

\title{Small-signal Stability of a Unified Single-unit Infinite-bus Swing-equation Model for Generators and Inverters}
  
\author{Debjyoti Chatterjee, Nathan Baeckeland, Bala Kameshwar Poolla, Gab-Su Seo, Brian Johnson, and Sairaj Dhople\\\thanks{
This work was authored by the National Laboratory of the Rockies (NLR) for the U.S. Department of Energy (DOE), operated under Contract No. DE-AC36-08GO28308. Funding provided in part by the U.S. Department of Energy Office of Energy Efficiency and Renewable Energy under the Solar Energy Technologies Office Award Number 38637 (UNIFI Consortium) and by the Laboratory Directed Research and Development (LDRD) Program at NLR. The views expressed do not necessarily represent those of the DOE or the U.S. Government. The U.S. Government retains, and the publisher, by accepting the article for publication, acknowledges that the U.S. Government retains a nonexclusive, paid-up, irrevocable, worldwide license to publish or reproduce this work for Government purposes. D. Chatterjee and B. Johnson are with the University of Texas at Austin; N. Baeckeland, B.K. Poolla, and G.-S. Seo are with the Power Systems Engineering Center of NLR; and S. Dhople is with the University of Minnesota.}}

\maketitle
\begin{abstract} We present a swing-equation model with generalized and equilibria-dependent inertia, damping, and synchronization constants for energy conversion interfaces with second-order active-power versus voltage-phasor-angle dynamics connected to an infinite bus. The model is unified in that prudent parameterization of the second-order angle-to-power transfer function aligns with reduced-order models for synchronous generators, grid-following inverters with fast frequency-response capability, and droop- and virtual synchronous generator-based grid-forming inverters. Parametric necessary and sufficient conditions to examine small-signal stability of angle equilibria are derived from the unified swing-equation model. 
\end{abstract}

\begin{IEEEkeywords}
Grid-following inverter, grid-forming inverter, small-signal stability, swing equation, synchronous generator.
\end{IEEEkeywords}

\allowdisplaybreaks

\section{Introduction}\label{sec:Intro}
\IEEEPARstart{S}{mall-signal} stability of angle equilibria for the single-machine infinite-bus (SMIB) model is a foundational tenet of power systems dynamics. While the setup is not remotely representative of complexity encountered in practice, the treatment sheds light on the interplay of power-flow equations and machine dynamics as they relate to an elementary notion of stability. Stability assessment has assumed increased importance with the changing resource mix on grids worldwide that now feature increasing numbers of grid-following~(GFL) and grid-forming~(GFM) inverter-based resources~(IBRs) alongside synchronous generators (SGs)~\cite{Lin-ImpactStability-2021}. Networks with varying complexity---including elemental single-unit models---have been in focus; however, much of the recent work on swing-equation-type models and stability of the SMIB setup in the context of IBRs is unit-specific. The main contribution of this effort is to update the ubiquitous SMIB model to one that can acknowledge the dynamics of contemporary frequency-responsive power electronics alongside SGs while preserving a pathway to analytically examine small-signal stability.

We study the quasi-steady-state dynamics of a controlled-voltage-source model for an energy-conversion interface connected to an infinite bus through an $RL$ line. It is assumed to inject a nominal amount of (commanded) active power modulated by a linear frequency-versus-power relationship. A second-order transfer function is presumed for the active power versus voltage-phasor angle, and the network interconnection is modeled via power-flow equations. A second-order swing-equation model that captures the small-signal angle dynamics emerges upon linearization about an equilibrium point. Unlike the classical swing equation for SGs, this version is generalized and sports equilibria-dependent inertia, damping, and synchronization constants. Furthermore, the model is unified in the sense that appropriate parameterization of the second-order angle-to-power transfer function aligns with reduced-order models for SGs~\cite{Ajala-2020}, GFL inverters with fast frequency response capability~\cite{Green-GFL_GFM-2023}, and droop- and virtual synchronous generator-based GFM inverters~\cite{Suul-Equivalence-2014}. 

Application of the Routh-Hurwitz criterion to the unified swing-equation model reveals necessary and sufficient conditions of unit and network parameters for examining the small-signal stability of angle equilibria. These conditions reveal that low-angle (high-angle) equilibria are nominally (un)stable; thereby aligning with long-standing results for SGs~\cite{kundur1994} and recent forays in transient stability assessment of GFL~\cite{Oriol-2025} and GFM IBRs~\cite{8026165}. By putting forth a unified modeling and small-signal stability assessment framework for all three energy-conversion interfaces, our work directly addresses a missing element in the literature on the topic. While the heterogeneous units present a seemingly uniform small-signal model, differences become evident on closer inspection of stability conditions. In particular, we find that GFL inverters with fast frequency response capability curiously offer a parametric regime in which the high-angle equilibrium can be stabilized.

To support our analytical contributions and validate the parametric necessary and sufficient conditions, we provide: i)~phase-portrait plots demonstrating dynamic behavior around equilibria for the reduced-order models, and ii)~eigenvalue plots obtained by linearizing higher-order models for SGs ($19\mathrm{th}$ order: featuring a round-rotor quadratic machine model, single-mass shaft model, automatic voltage regulator, turbine governor, and power system stabilizer), GFL inverters ($12\mathrm{th}$ order: featuring PI controllers for active and reactive power, inner current control, a synchronous reference frame phase-locked loop, and output $LCL$ filter), and GFM inverters ($14\mathrm{th}$ order: featuring a primary controller, outer voltage and inner current controllers, and output $LCL$ filter). 

The remainder of this paper is organized as follows. Section~\ref{sec:UnitModel} presents the model for an individual unit. Section~\ref{sec:Unified_SMIB} demonstrates how such a model reveals a second-order swing equation upon considering second-order unit dynamics and algebraic power-flow equations. Building on this, we outline the parametric necessary and sufficient conditions. Models for SGs, GFLs, and GFMs are presented, and stability conditions are identified. Validation covering phase portraits for the reduced-order models (examined analytically) and eigenvalues for higher-order models (discussed above) is provided in Section~\ref{sec:Validation}. Concluding remarks and a few pertinent directions for future work are given in Section~\ref{sec:ConclusionsFutureWork}.  

\section{Unit Model} \label{sec:UnitModel}
This section begins with a generalized representation of unit dynamics in the frequency domain. Nonlinear algebraic equations that describe equilibria and linearized dynamics around equilibria in generalized form are identified from the unit dynamics. 

\subsection{Generalized Frequency-domain Representation} 
The synchronous electrical radian frequency is denoted by $\omega_\mathrm{s}$. Let $E \angle\delta$ denote the line-to-neutral terminal voltage of the unit in a per-phase equivalent representation and $V_\infty \angle{0}$ denote the line-to-neutral voltage of the infinite bus. We suppose the unit is receptive to an active-power reference from a higher-level controller, $P^\star$. Furthermore, suppose $P(\delta)$ represents the nonlinear relationship between active power versus voltage-phasor angle. Let 
\begin{equation*} \label{eq:Laplace-P-delta}
    \mathcal{D}(s) = \mathscr{L}\{\delta(t)\},\,\mathcal{P}_\delta(s) = \mathscr{L}\{P(\delta(t))\}, \mathcal{P}^\star(s) = \mathscr{L}\{P^\star\}
\end{equation*}
denote pertinent Laplace transforms. The frequency at the terminals of the unit is given by
\begin{equation}
    \Omega(s) = \mathscr{L}\{\omega(t) = \tfrac{\mathrm{d}}{\mathrm{d}t}\delta(t) + \omega_\mathrm{s}\}= s\mathcal{D}(s) + \tfrac{1}{s}\omega_\mathrm{s}.
\end{equation}
Internal controls of the unit that map the power injection at the terminals to the voltage-phasor angle are captured via the angle-to-power transfer function $\mathcal{G}(s)$. The large-signal power versus angle relationship is governed by:
    \begin{equation} \label{eq:delta-p-Laplace}
        \mathcal{G}(s) \mathcal{D}(s) = \mathcal{P}^\star(s) - \mathcal{P}_\delta(s).
    \end{equation}

\subsection{Equilibrium Behavior} 
For a signal $f(t)$ with Laplace transform $\mathcal{F}(s)$, the Final Value Theorem indicates $\lim_{t\to \infty} f(t)=\lim_{s\to 0} s\, \mathcal{F}(s)$. Let $\delta(t)=\delta_\mathrm{eq}$ denote an equilibrium of the incipient dynamics and let $P(\delta_\mathrm{eq})$ correspond to the active power flow at this equilibrium. Applying the Final Value Theorem, these are
\begin{subequations} \label{eq:FVT-terms}
\begin{align}
\delta_{\rm eq}&=\lim_{t\to \infty} \delta(t)=\lim_{s\to 0} s\, \mathcal{D}(s),\\
P(\delta_{\rm eq})&=\lim_{t\to \infty} P(\delta(t))=\lim_{s\to 0} s\,\mathcal{P}_\delta(s).
\end{align}
\end{subequations}
Consider~\eqref{eq:delta-p-Laplace} in the limit:
\begin{align*}
\lim_{s\to 0} s\,\mathcal{G}(s) \mathcal{D}(s) = \lim_{s\to 0}\big(s\,\mathcal{P}^\star(s) -s\, \mathcal{P}_\delta(s) \big),
\end{align*}
which, assuming pertinent limits are well defined, can be decomposed as: 
\begin{equation*} 
\lim_{s\to 0} \mathcal{G}(s) \cdot \lim_{s\to 0} s\,\mathcal{D}(s) = \lim_{s\to 0}s\,\mathcal{P}^\star(s) -\lim_{s\to 0}s\, \mathcal{P}_\delta(s).
\end{equation*}
Applying the definitions reported in~\eqref{eq:FVT-terms} and recognizing that $P^\star$ is a constant, we get the nonlinear algebraic equation
\begin{equation} \label{eq:delta-p-steadystate2}
        \mathcal{G}(0)\, \delta_\mathrm{eq} = P^\star - P(\delta_\mathrm{eq}),
\end{equation}
from which the equilibrium $\delta_\mathrm{eq}$ can be solved for. 

\subsection{Linearized Dynamics} 
As a next step, we linearize~\eqref{eq:delta-p-Laplace} around equilibria identified by the solution of~\eqref{eq:delta-p-steadystate2}. To that end, assume a small perturbation $\delta(t) = \delta_\mathrm{eq} + \Delta \delta(t)$, which translates in the Laplace domain to $\mathcal{D}(s) = \tfrac{1}{s}\delta_\mathrm{eq} + \Delta \delta(s)$, where $\Delta \delta(s) = \mathscr{L}\{\Delta\delta(t)\}$. Substituting this in~\eqref{eq:delta-p-Laplace}, we obtain
\begin{equation} \label{eq:Linear}
    \mathcal{G}(s) (\tfrac{1}{s}\delta_\mathrm{eq} + \Delta \delta(s)) = \mathcal{P}^\star(s) - \mathscr{L}\{P(\delta_\mathrm{eq} + \Delta \delta(t))\}.
\end{equation}
Applying a first-order Taylor-series expansion for $P(\cdot)$ allows us to approximate $\mathscr{L}\{P(\delta_\mathrm{eq} + \Delta \delta(t))\}$ as follows:
\begin{align}
    \mathscr{L}\{P(\delta_\mathrm{eq} + \Delta \delta(t))\} &\approx \mathscr{L}\{P(\delta_\mathrm{eq}) + \tfrac{\partial}{\partial \delta}P(\delta)\vert_{\delta = \delta_\mathrm{eq}}\Delta \delta(t)\} \nonumber \\
    &= \tfrac{1}{s}P(\delta_\mathrm{eq})+ \tfrac{\partial}{\partial \delta}P(\delta)\vert_{\delta = \delta_\mathrm{eq}}\Delta \delta(s).
\end{align}
Substituting the above in~\eqref{eq:Linear}, recognizing that $\mathcal{G}(s)\, \delta_\mathrm{eq} = P^\star - P(\delta_\mathrm{eq})$, and rearranging terms yields the linear model: 
\begin{equation}\label{eq:delta-p-Laplace-linearized}
    \Big(\mathcal{G}(s) + \tfrac{\partial}{\partial \delta}P(\delta)\vert_{\delta = \delta_\mathrm{eq}} \Big) \Delta \delta(s) = 0.
\end{equation}

\section{Unified Swing-equation Model \& Stability} \label{sec:Unified_SMIB}
In this section, we instantiate the unit models via generalized second-order transfer functions and adopt algebraic power-flow equations to model the connection of the units with the infinite bus. This reveals the unified swing-equation model and parametric stability conditions follow therefrom by applying the Routh-Hurwitz criterion. The conditions are tailored to reduced-order models for SGs, GFLs, and GFMs.

\subsection{Integrating Unit \& Network Models}
For the units, we presume the following general form for the voltage-phasor angle to active power transfer function: 
\begin{equation}
\label{eq:angle-power-tf}
    \mathcal{G}(s) = \frac{b_2 s^2 + b_1 s }{a_2 s^2 + a_1 s + a_0}, 
\end{equation}
where $a_i, b_i$ are non-negative constants. In Section~\ref{eq:resources}, we dwell on how reduced-order models for SGs, GFL IBRs, and GFM IBRs are of the above unified form. For the network interconnection, we suppose an $RL$ line with resistance, $R$, and reactance, $X=\omega_\mathrm{s}L$. The power flow is then
\begin{equation} \label{eq:P-delta}
    P(\delta) = E^2 \tfrac{R}{R^2 + X^2} - EV_\infty \tfrac{1}{R^2 + X^2}(R \cos\delta  - X \sin\delta). 
\end{equation}
As a standing assumption to ensure feasibility, we will suppose 
\begin{equation} \label{eq:P-feasible}
    P^\star \leq \max_{\delta} P(\delta) = E^2 \tfrac{R}{R^2 + X^2} + EV_\infty \tfrac{1}{\sqrt{R^2 + X^2}}.
\end{equation}
In what follows, we will require the power-flow sensitivity:
\begin{equation} \label{eq:P-sensitivity}
    \tfrac{\partial}{\partial \delta}P(\delta) = E V_\infty\tfrac{1}{R^2 + X^2}(R\sin \delta + X \cos\delta).
\end{equation}
Figure~\ref{Fig:SingleResourceInfiniteBus} illustrates the single-unit infinite-bus model described thus far. 

\begin{figure}[t!]
    \centering
    \includegraphics[scale= 1]{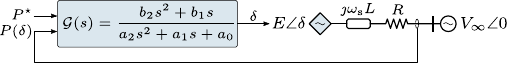}
    \caption{Single-unit infinite-bus setup includes an energy-conversion interface with a second-order angle-to-power transfer function.}
    \label{Fig:SingleResourceInfiniteBus}
\end{figure}

Substituting for $\mathcal{G}(s)$ from~\eqref{eq:angle-power-tf} and for $\tfrac{\partial}{\partial \delta}P(\delta)$ from~\eqref{eq:P-sensitivity} in~\eqref{eq:delta-p-Laplace-linearized}, simplifies it to the \textit{unified swing-equation model}: 
\begin{equation} \label{eq:unified-swing}
    M(\delta_\mathrm{eq}) s^2 \Delta \delta(s) + D(\delta_\mathrm{eq}) s \Delta \delta(s) + K(\delta_\mathrm{eq}) \Delta\delta(s) = 0,
\end{equation}
where $M(\delta_\mathrm{eq})$, $D(\delta_\mathrm{eq})$, and $K(\delta_\mathrm{eq})$ are the corresponding inertia, damping, and synchronization constants, given by
\begin{subequations}\label{eq:MDK}
\begin{align} 
    M(\delta_\mathrm{eq}) &= b_2 + a_2 \tfrac{\partial}{\partial \delta}P(\delta)\vert_{\delta = \delta_\mathrm{eq}}, \label{eq:M}\\
    D(\delta_\mathrm{eq}) &= b_1 + a_1 \tfrac{\partial}{\partial \delta}P(\delta)\vert_{\delta = \delta_\mathrm{eq}}, \label{eq:D}\\
    K(\delta_\mathrm{eq}) &= a_0 \tfrac{\partial}{\partial \delta}P(\delta)\vert_{\delta = \delta_\mathrm{eq}}.\label{eq:K}
\end{align}
\end{subequations}

\subsection{Parametric Conditions for Small-signal Stability}
With these elements in place, we next state and prove our main result on small-signal stability of equilibria referenced in the unified swing-equation model. 

\begin{theorem}
    The unit model~\eqref{eq:delta-p-Laplace} with angle-to-power transfer function given by the general form~\eqref{eq:angle-power-tf} and power flow captured by~\eqref{eq:P-delta} admits equilibria: 
    \begin{subequations} \label{eq:delta_eq12}
    \begin{align}
    \delta_{\mathrm{eq},1} &= \sin^{-1} \alpha + \tan^{-1}\Big(\tfrac{R}{X}\Big), \label{eq:delta_eq1}\\
    \delta_{\mathrm{eq},2} &= \pi- \sin^{-1} \alpha + \tan^{-1}\Big(\tfrac{R}{X}\Big),  \label{eq:delta_eq2}\\
    \quad \mathrm{where} & \,\,\,\,\alpha = \Big(\tfrac{P^\star (R^2+X^2)-E^2 R }{E V_\infty \sqrt{R^2 + X^2}}\Big). \label{eq:alpha}
    \end{align}
    \end{subequations}
    Equilibrium point $\delta_{\mathrm{eq},1}$ is small-signal stable if and only if
    \begin{align}
    \label{eq:SEPstability}
    \begin{split}
        a_0 \,(b_2 + a_2 (\tfrac{EV_\infty}{\sqrt{R^2 + X^2}}\sqrt{1-\alpha^2}) &> 0, \,\, \mathrm{and}\\
        a_0\,(b_1 + a_1 (\tfrac{EV_\infty}{\sqrt{R^2 + X^2}}\sqrt{1-\alpha^2}) &> 0.
    \end{split}
    \end{align}
    Equilibrium point $\delta_{\mathrm{eq},2}$ is small-signal stable if and only if
    \begin{align}
      \label{eq:UEPstability}
    \begin{split}
        a_0 (b_2 - a_2 (\tfrac{EV_\infty}{\sqrt{R^2 + X^2}}\sqrt{1-\alpha^2}) &< 0, \,\, \mathrm{and}\\
        a_0 (b_1 - a_1 (\tfrac{EV_\infty}{\sqrt{R^2 + X^2}}\sqrt{1-\alpha^2}) &< 0.
    \end{split}
    \end{align}
\end{theorem}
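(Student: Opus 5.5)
The plan is to split the argument into two parts. First I locate the equilibria from the steady-state relation~\eqref{eq:delta-p-steadystate2}. Then I apply the Routh--Hurwitz criterion to the second-order characteristic polynomial of the unified swing-equation model~\eqref{eq:unified-swing}, evaluated at each equilibrium.

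\textbf{Equilibria.} The numerator of $\mathcal{G}(s)$ in~\eqref{eq:angle-power-tf} has no constant term. Hence $\mathcal{G}(0)=0$, provided $a_0\neq 0$ so that $\mathcal{G}(0)$ is well defined. Then~\eqref{eq:delta-p-steadystate2} reduces to $P(\delta_\mathrm{eq})=P^\star$.

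Set $Z=\sqrt{R^2+X^2}$ and $\varphi=\tan^{-1}(R/X)$, so that $\cos\varphi = X/Z$ and $\sin\varphi=R/Z$. Then
\begin{equation*}
R\cos\delta - X\sin\delta = -Z\sin(\delta-\varphi).
\end{equation*}
Substituting this into~\eqref{eq:P-delta} gives $P(\delta)=E^2R/Z^2 + (EV_\infty/Z)\sin(\delta-\varphi)$. The condition $P(\delta)=P^\star$ is therefore equivalent to $\sin(\delta-\varphi)=\alpha$, with $\alpha$ as in~\eqref{eq:alpha}. The feasibility assumption~\eqref{eq:P-feasible} gives $\alpha\le 1$, and the two solutions on one period are exactly~\eqref{eq:delta_eq1}--\eqref{eq:delta_eq2}.

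\textbf{Sensitivity at each equilibrium.} By the same rotation,~\eqref{eq:P-sensitivity} becomes
\begin{equation*}
\tfrac{\partial}{\partial\delta}P = (EV_\infty/Z)\cos(\delta-\varphi).
\end{equation*}
At $\delta_{\mathrm{eq},1}$ we have $\cos(\sin^{-1}\alpha)=\sqrt{1-\alpha^2}$. At $\delta_{\mathrm{eq},2}$ we have $\cos(\pi-\sin^{-1}\alpha)=-\sqrt{1-\alpha^2}$. Write $c=(EV_\infty/Z)\sqrt{1-\alpha^2}\ge 0$. The constants in~\eqref{eq:MDK} are then:
\begin{itemize}
\item at $\delta_{\mathrm{eq},1}$: $M=b_2+a_2c$, $D=b_1+a_1c$, $K=a_0c$;
\item at $\delta_{\mathrm{eq},2}$: $M=b_2-a_2c$, $D=b_1-a_1c$, $K=-a_0c$.
\end{itemize}

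\textbf{Stability conditions.} The characteristic polynomial is $M s^2 + D s + K$. By Routh--Hurwitz for a quadratic, both roots lie in the open left half-plane if and only if $M$, $D$, $K$ are nonzero and share a common sign. For $M\neq 0$ this is equivalent to $MK>0$ and $DK>0$.

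At $\delta_{\mathrm{eq},1}$, $K=a_0c$. When $c>0$, the conditions $MK>0$ and $DK>0$ are equivalent to $a_0M>0$ and $a_0D>0$, which is~\eqref{eq:SEPstability}.

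At $\delta_{\mathrm{eq},2}$, $K=-a_0c$, so the same reasoning flips the inequalities to $a_0M<0$ and $a_0D<0$, which is~\eqref{eq:UEPstability}.

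\textbf{Main obstacle: degenerate cases.} The step needing care is the correspondence between ``$M,D,K$ share a sign'' and the stated products, in the edge cases.
\begin{itemize}
\item If $\alpha=\pm1$, then $c=0$, so $K=0$ and a zero root appears. The stated conditions must then be read as failing, or the case excluded (the two equilibria coincide).
\item If $a_0=0$, both sides fail consistently, since the products vanish and so does $K$.
\item If $M=0$, the polynomial drops to first order. The conditions exclude this, since they require $a_0M\neq0$, so the ``only if'' direction is preserved.
\end{itemize}
I would dispatch these cases explicitly and note that the Final Value Theorem step leading to~\eqref{eq:delta-p-steadystate2} presumes $\mathcal{G}(0)$ is finite, which again requires $a_0\neq0$.
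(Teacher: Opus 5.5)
Your proposal is correct and follows the paper's proof essentially step for step ($\mathcal{G}(0)=0$ reducing the equilibrium condition to $P(\delta_{\mathrm{eq}})=P^\star$, a phase-shift identity giving the two equilibria and sensitivities $\pm\psi$, and second-order Routh--Hurwitz recast as $MK>0$, $DK>0$). You handle $a_0\neq0$ and $\alpha=\pm1$ more carefully than the paper, which simply asserts $|\alpha|<1$ from an assumption that only gives $\alpha\le1$. One slip in your edge-case remarks: when $M=0$ it is the ``if'' direction that survives, not the ``only if''---for instance $b_2=a_2=0$ gives a first-order model that is stable at $\delta_{\mathrm{eq},1}$ whenever $a_0(b_1+a_1c)>0$, even though the first stated condition fails.
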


\begin{proof} Equilibria of the involved dynamics are given by the solution of~\eqref{eq:delta-p-steadystate2}. From~\eqref{eq:angle-power-tf}, we note $\mathcal{G}(0)=0$; consequently,~\eqref{eq:delta-p-steadystate2} simplifies to $P^\star = P(\delta_\mathrm{eq})$. Substituting for $P(\delta_\mathrm{eq})$ from~\eqref{eq:P-delta} and with the aid of elementary trigonometric identities, it emerges that the desired equilibria are $\delta_{\mathrm{eq},1}$ and $\delta_{\mathrm{eq},2}$ as reported in~\eqref{eq:delta_eq1}--\eqref{eq:delta_eq2}. (Note that the assumption on feasible power flow in~\eqref{eq:P-feasible} translates to $\alpha$ as defined in~\eqref{eq:alpha} satisfying $|\alpha|<1$.) Next, we consider small-signal stability of the equilibria, for which we will need an expression for the sensitivity $\tfrac{\partial}{\partial \delta}P(\delta)\vert_{\delta = \delta_\mathrm{eq}}$. 

Substituting for $\delta_{\mathrm{eq},1}$ and $\delta_{\mathrm{eq},2}$ in~\eqref{eq:P-sensitivity} followed by further trigonometric manipulations reveals
\begin{equation} \label{eq:SEP_UEP_partial}
    \tfrac{\partial}{\partial \delta}P(\delta) = \left\{
\begin{aligned}
+\psi > 0, \, & \delta = \delta_{\mathrm{eq},1}\\
-\psi < 0, \, & \delta = \delta_{\mathrm{eq},2}
\end{aligned}
\right.,\,\, \psi = \tfrac{EV_\infty \sqrt{1-\alpha^2}}{\sqrt{R^2 + X^2}}.
\end{equation} 
A necessary and sufficient condition for the roots of~\eqref{eq:unified-swing} to have negative real parts can be obtained from the second-order Routh-Hurwitz condition; in this case, simply resolving to 
\begin{equation} \label{eq:stability}
    \mathrm{sgn}(M(\delta_\mathrm{eq})) = \mathrm{sgn}(D(\delta_\mathrm{eq})) = \mathrm{sgn}(K(\delta_\mathrm{eq})).
\end{equation}
The above conditions can be expressed alternatively as
     \begin{equation} \label{eq:StabilityCompact}
    M(\delta_\mathrm{eq})\cdot K(\delta_\mathrm{eq})> 0 \,\,\mathrm{and}\,\,D(\delta_\mathrm{eq})\cdot K(\delta_\mathrm{eq})> 0.
    \end{equation}
Substituting the expressions~\eqref{eq:SEP_UEP_partial} in~\eqref{eq:MDK} and following up with~\eqref{eq:StabilityCompact} yields the stability conditions in~\eqref{eq:SEPstability} and \eqref{eq:UEPstability}.
\end{proof}

\begin{table}[t]
\centering
\caption{Small-signal stability conditions for the two equilibria.}
\label{tab:stability}
\renewcommand{\arraystretch}{1.5}
\begin{tabular}{c||c|c}
\toprule
\midrule
\textbf{unit} &
\textbf{\boldmath$\delta_{\mathrm{eq},1}$ stable iff} &
\textbf{\boldmath$\delta_{\mathrm{eq},2}$ stable iff} \\
\midrule

SG &
$\begin{array}{c}
\tfrac{2H}{\omega_\mathrm{s}}>0,\;\;
\tfrac{D}{\omega_\mathrm{s}}>0
\end{array}$ &
$\begin{array}{c}
\tfrac{2H}{\omega_\mathrm{s}}<0,\;\;
\tfrac{D}{\omega_\mathrm{s}}<0
\end{array}$ \\
\midrule

GFM &
$\begin{array}{c}
\tfrac{1}{m_\mathrm{p}\omega_\mathrm{c}}>0,\;\;
\tfrac{1}{m_\mathrm{p}}>0
\end{array}$ &
$\begin{array}{c}
\tfrac{1}{m_\mathrm{p}\omega_\mathrm{c}}<0,\;\;
\tfrac{1}{m_\mathrm{p}}<0
\end{array}$ \\
\midrule

GFL &
$\begin{array}{c}
\tfrac{2\zeta_\mathrm{PLL}}{m_\mathrm{p}\omega_{\rm PLL}}
+\tfrac{1}{\omega_{\rm PLL}^2}\psi > 0 \\[4pt]
\tfrac{1}{m_\mathrm{p}}
+\tfrac{2\zeta_\mathrm{PLL}}{\omega_{\rm PLL}}\psi > 0
\end{array}$ &
$\begin{array}{c}
\tfrac{2\zeta_\mathrm{PLL}}{m_\mathrm{p}\omega_{\rm PLL}}
-\tfrac{1}{\omega_{\rm PLL}^2}\psi < 0 \\[4pt]
\tfrac{1}{m_\mathrm{p}}
-\tfrac{2\zeta_\mathrm{PLL}}{\omega_{\rm PLL}}\psi < 0
\end{array}$ \\
\midrule
\bottomrule
\end{tabular}
\end{table}

\begin{figure*}
    \centering
    \includegraphics[scale= 1.0]{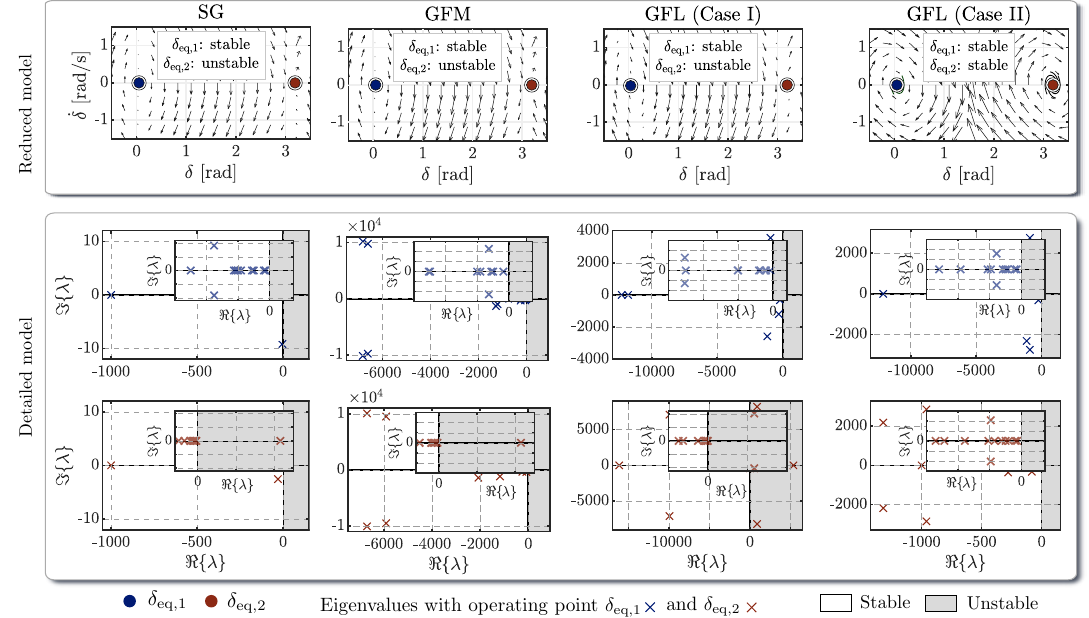}
    \caption{\emph{Top} (Reduced model) Phase portraits in the $(\delta,\,\dot{\delta})$ plane, marking $\delta_{\mathrm{eq},1}$ and $\delta_{\mathrm{eq},2}$. \emph{Bottom} (Detailed model) Eigenvalues about $\delta_{\mathrm{eq},1}$ and $\delta_{\mathrm{eq},2}$; the unshaded (shaded) region is the stable (unstable) half-plane. The SG and GFM are stable at $\delta_{\mathrm{eq},1}$ and unstable at $\delta_{\mathrm{eq},2}$; the GFL follows this in Case~I. In Case~II, different values for $\omega_{\mathrm{PLL}}$ and  $\zeta_\mathrm{PLL}$, stabilizes $\delta_{\mathrm{eq},2}$ (validated by the absence of right-half-plane eigenvalues).}
    \label{fig:stability}
\end{figure*}

\subsection{Examining Generators \& Inverters} \label{eq:resources}
In what follows, we examine reduced-order models for SGs, GFL inverters, and GFM inverters, and recognize that they align with~\eqref{eq:angle-power-tf}. Consequently, in a single-unit infinite-bus setting, each admits the equilibria reported in~\eqref{eq:delta_eq12}. Parametric conditions for the two equilibria are summarized in Table~\ref{tab:stability}. (Table~\ref{tab:stability_full} in Appendix~A provides helpful intermediate details.)
\subsubsection{Synchronous Generator}
Adopting the classical model (acknowledging only simplified rotor dynamics): 
\begin{equation} \label{eq:H-SG}
    \mathcal{G}_\mathrm{SG}(s) = \tfrac{2H}{\omega_\mathrm{s}}s^2 + \tfrac{D}{\omega_\mathrm{s}}s,
\end{equation}
where $H$ is inertia constant and $D$ is the damping constant~\cite{Ajala-2020}. From the conditions in Table~\ref{tab:stability}; for $H,D>0$, $\delta_{\mathrm{eq},1}$ is always stable, while $\delta_{\mathrm{eq},2}$ is always unstable. 

\subsubsection{Grid-following IBR}
The GFL IBR model includes a synchronous reference frame (SRF) phase-locked loop (PLL) implemented with a Proportional Integral (PI) controller with gains $k_\mathrm{p, PLL}$ and $k_\mathrm{i, PLL}$. Suppose the $\mathrm{q}$-axis component of the terminal voltage is regulated to zero; the estimated frequency in quasi-steady state, $\Omega_\mathrm{GFL}(s) = \mathscr{L}\{\omega_\mathrm{GFL}(t)\}$ is:~\cite {Green-GFL_GFM-2023}  
\begin{equation} \label{eq:GFL-omega-relationship}
   \Omega_\mathrm{GFL}(s) = \frac{2 \zeta_\mathrm{PLL} \omega_\mathrm{PLL}s + \omega^2_\mathrm{PLL}}{s^2 + 2\zeta_\mathrm{PLL}\omega_\mathrm{PLL}s + \omega_\mathrm{PLL}} \Omega(s),
\end{equation}
where $\zeta_\mathrm{PLL}$ and $\omega_\mathrm{PLL}$ are the damping coefficient and bandwidth of the PLL related to the PI gains, $k_\mathrm{p, PLL},k_\mathrm{i, PLL}$, and terminal voltage, $E$, via: $k_\mathrm{i, PLL} E = \omega_{\mathrm{PLL}}^2$ and $k_\mathrm{p, PLL} E = 2\zeta_\mathrm{PLL} \omega_{\mathrm{PLL}}$. The GFL IBR is set up to provide fast frequency response, with $m_\mathrm{p}$ denoting the active power versus frequency droop gain. The active-power balance equation then reads: $\mathcal{P}_\delta(s) = \mathcal{P}^\star(s) - \tfrac{1}{m_\mathrm{p}} \big(\Omega_\mathrm{GFL} (s) - \tfrac{1}{s}\omega_\mathrm{s}\big)$. Substituting $\Omega_\mathrm{GFL}(s)$ from~\eqref{eq:GFL-omega-relationship}, yields a model of the form~\eqref{eq:delta-p-Laplace}:~\cite {Green-GFL_GFM-2023} 
\begin{equation} \label{eq:H-GFL}
    \mathcal{G}_\mathrm{GFL} (s) =  \frac{\frac{2\zeta_\mathrm{PLL}}{m_\mathrm{p}\omega_{\mathrm{PLL}}} s^2 + \tfrac{1}{m_\mathrm{p}}s}{\frac{1}{\omega_{\mathrm{PLL}}^2}s^2 + \frac{2\zeta_\mathrm{PLL}}{\omega_{\mathrm{PLL}}} s + 1}.
\end{equation}
Consider the conditions reported in Table~\ref{tab:stability}. For positive $\omega_{\mathrm{PLL}}$, $\zeta_\mathrm{PLL}$, $m_\mathrm p$, and recognizing from~\eqref{eq:SEP_UEP_partial} that $\psi >0$, we see that $\delta_{\mathrm{eq},1}$ is guaranteed to be small-signal stable. Rearranging terms, the stability condition for $\delta_{\mathrm{eq},2}$ can be expressed as $\tfrac{2\zeta_\mathrm{PLL}\,\omega_{\mathrm{PLL}}}{m_\mathrm p}<\psi$ and $\tfrac{\omega_{\mathrm{PLL}}}{2\zeta_\mathrm{PLL}\,m_\mathrm p}<\psi$. While these can be satisfied by judiciously tuned parameters, the corresponding dynamics of the PLL are rendered unreasonably sluggish. 
 
\subsubsection{Grid-forming IBR}
Modeling the active-power versus frequency droop behavior with a low-pass filter on the power measurement yields the transfer function: 
\begin{equation} \label{eq:H-GFM}
    \mathcal{G}_\mathrm{GFM}(s) = \tfrac{1}{m_\mathrm{p}\omega_\mathrm{c}} s^2 + \tfrac{1}{m_\mathrm{p}} s,
\end{equation}
where $m_\mathrm{p}$ is the active power versus frequency droop gain, and $\omega_\mathrm{c}$ is the cut-off frequency for the first-order low-pass filter on the active-power measurement at the terminals. With the equivalence reported in~\cite{Suul-Equivalence-2014}, the above model also applies to virtual synchronous generator-based primary controls. On the other hand, virtual oscillator-based primary controls---in their original incarnations---do not align due to the conspicuous absence of inertia~\cite{Ajala-2023}. (Follow-up efforts appear to have included inertia contributions; see, e.g.,~\cite{VOC-Review-2022}.) From the conditions reported in Table~\ref{tab:stability}; we see that for $m_\mathrm{p},\omega_\mathrm{c}>0$, $\delta_{\mathrm{eq},1}$ is always stable, while $\delta_{\mathrm{eq},2}$ is always unstable. 

\section{Validation} \label{sec:Validation}
Simulations are performed in the open-source toolbox \texttt{PowerSimulationsDynamics.jl}~\cite{lara2024powersimulationsdynamicsjlopensource} with models available in~\cite{damola_ajeyemi_2026_20779024}. We test the analysis on a single-unit infinite-bus system with line impedance $0.005+\jmath0.10~\mathrm{[pu]}$, $E=V_\infty=1.0~\mathrm{[pu]}$, and $P^\star=0.5~\mathrm{[pu]}$. Note that the interconnection is predominantly inductive with $X/R=20$. Furthermore, all validation cases use the same single-unit infinite-bus network so that differences in small-signal behavior can be attributed to the unit dynamics rather than to changes in the operating point or network strength. For these parameters,~\eqref{eq:delta_eq12} gives two equilibria: $\delta_{\mathrm{eq},1}=2.87^\circ$ and $\delta_{\mathrm{eq},2}=182.86^\circ$. The following parameters are adopted for the SG's reduced-order model: $H=4~\mathrm{[s]}$, $D=40$; and for the GFM: $m_\mathrm{p}=0.025$, $\omega_\mathrm{c}=5~\mathrm{[rad\cdot\mathrm{s}^{-1}]}$. These values are picked to ensure alignment of the $a_i,b_i$ coefficients in the angle-to-power transfer functions. Two cases are considered for the GFL. In Case~I, $\zeta_\mathrm{PLL}=2.711$, $\omega_\mathrm{PLL}=27.11~\mathrm{[rad \cdot s^{-1}]}$, $m_\mathrm{p}=0.025$; which impose nominal tracking speed. In Case~II, $\zeta_\mathrm{PLL}=0.21$, $\omega_\mathrm{PLL}=2.164~\mathrm{[rad \cdot s^{-1}]}$, $m_\mathrm{p}=0.025$, which renders the PLL sluggish but will be useful to demonstrate the stabilizability of $\delta_{\mathrm{eq},2}$. (See Table~\ref{tab:reduced_order_parameters_all_resources} in Appendix~B for detailed model parameters.)

Figure~\ref{fig:stability} \textit{Top} (Reduced model) shows phase portraits from the second-order models. Figure~\ref{fig:stability} \textit{Bottom} (Detailed model) shows eigenvalues of the detailed $19\mathrm{th}$ order SG, $12\mathrm{th}$ order GFL, and $14\mathrm{th}$ order GFM models (introductory remarks document the dynamics considered) linearized around $\delta_{\mathrm{eq},1}$ and $\delta_{\mathrm{eq},2}$. (Parameters are listed in Appendix~C, and for the chosen parameters, equilibria across the reduced and detailed models are about the same.) We make two broad observations:
\begin{itemize}[leftmargin=*]
    \item For the SG and GFM, the phase portraits for the reduced model show a stable node at $\delta_{\mathrm{eq},1}$ and an unstable node at $\delta_{\mathrm{eq},2}$. Eigenvalues confirm $\delta_{\mathrm{eq},1}$ is stable while $\delta_{\mathrm{eq},2}$ is unstable.
    \item For the GFL, parameters in Case~I induce the same behavior as the SG and GFM reported above. On the other hand, the parameters picked for Case~II render $\delta_{\mathrm{eq},2}$ stable. 
\end{itemize}

\section{Concluding Remarks \& Future Work} \label{sec:ConclusionsFutureWork}
We proposed a unified swing-equation model for SGs, GFL inverters with fast frequency response, and droop- and virtual synchronous generator-based GFM inverters connected to an infinite bus. Parametric conditions for small-signal stability of angle equilibria were derived using reduced-order representations for the angle-to-power transfer functions. Eigenvalues obtained by linearizing higher-order dynamics were provided to justify modeling assumptions and validate inferences from the lower-order representations. As part of future work, we aim to scale the analysis to network-wide stability conditions and bring in other resource models into the modeling framework. 

\appendix
\subsection{Unified Swing-equation Model Parameters and Small-signal Stability Conditions}
Table~\ref{tab:stability_full} maps each unit's reduced-order angle-to-power transfer function $\mathcal{G}(s)$ onto the unified form in~\eqref{eq:angle-power-tf}. For the SG, GFM, and GFL, it lists the numerator and denominator coefficients $b_2$, $b_1$, $a_2$, $a_1$, $a_0$. The SG and GFM have $a_2 = a_1 = 0$ and $a_0 = 1$, so their inertia and damping constants $M(\delta_{\mathrm{eq}})$ and $D(\delta_{\mathrm{eq}})$ depend only on the unit. The GFL has a nontrivial denominator, and the power-flow sensitivity $\frac{\partial}{\partial\delta}P(\delta_{\mathrm{eq}})$ enters both $M(\delta_{\mathrm{eq}})$ and $D(\delta_{\mathrm{eq}})$ through it. The last two columns apply Theorem~1. They give the sign conditions of~\eqref{eq:SEPstability}--\eqref{eq:UEPstability} that govern small-signal stability at the low-angle equilibrium $\delta_{\mathrm{eq},1}$ and the high-angle equilibrium $\delta_{\mathrm{eq},2}$. 

\begin{table*}[t]
\centering
\caption{Unified swing-equation model parameters for SG, GFL, and GFM units; and small-signal stability conditions for the two equilibria.}
\label{tab:stability_full}
\renewcommand{\arraystretch}{1.0}{
\begin{tabular}{c||cccccc|ccc}
\toprule
\midrule
\textbf{unit} & \boldmath{$b_2$} &  \boldmath{$b_1$} &  \boldmath{$b_0$} &  \boldmath{$a_2$} & \boldmath{ $a_1$} &   \boldmath{$a_0$} &
 \boldmath{$M(\delta_{\rm eq})$} &  \boldmath{$D(\delta_{\rm eq})$} &  \boldmath{$K(\delta_{\rm eq})$} \\
\midrule
SG &
$\tfrac{2H}{\omega_\mathrm{s}}$ & $\tfrac{D}{\omega_\mathrm{s}}$ & $0$ & $0$ & $0$ & $1$ &
$\tfrac{2H}{\omega_\mathrm{s}}$ & $\tfrac{D}{\omega_\mathrm{s}}$ &
$\tfrac{\partial}{\partial \delta}P(\delta_{\rm eq})$ \\
\midrule
GFM &
$\tfrac{1}{m_\mathrm{p}\omega_\mathrm{c}}$ & $\tfrac{1}{m_\mathrm{p}}$ & $0$ & $0$ & $0$ & $1$ &
$\tfrac{1}{m_\mathrm{p}\omega_\mathrm{c}}$ & $\tfrac{1}{m_\mathrm{p}}$ &
$\tfrac{\partial}{\partial \delta}P(\delta_{\rm eq})$ \\
\midrule
GFL &
$\tfrac{2\zeta_\mathrm{PLL}}{m_\mathrm{p}\omega_{\rm PLL}}$ & $\tfrac{1}{m_\mathrm{p}}$ & $0$ &
$\tfrac{1}{\omega_{\rm PLL}^2}$ & $\tfrac{2\zeta_\mathrm{PLL}}{\omega_{\rm PLL}}$ & $1$ &
$\begin{array}{c} \tfrac{2\zeta_\mathrm{PLL}}{m_\mathrm{p}\omega_{\rm PLL}}+\tfrac{1}{\omega_{\rm PLL}^2}\tfrac{\partial}{\partial \delta}P(\delta_{\rm eq}) \end{array}$ &
$\begin{array}{c} \tfrac{1}{m_\mathrm{p}}+\tfrac{2\zeta_\mathrm{PLL}}{\omega_{\rm PLL}}\tfrac{\partial}{\partial \delta}P(\delta_{\rm eq}) \end{array}$ &
$\tfrac{\partial}{\partial \delta}P(\delta_{\rm eq})$ \\
\midrule
\bottomrule
\end{tabular}
}
\end{table*}

\subsection{Reduced-order Model Parameters and Coefficients of Unified Swing-equation Model}
Table~\ref{tab:reduced_order_parameters_all_resources} reports the resource-specific data in three blocks. The first block lists the reduced-order unit and controller parameters for each case: the SG inertia and damping, the GFM droop gain and filter cutoff, and the PLL bandwidth, damping, and PI gains for the two GFL cases. The second block gives the resulting transfer-function coefficients $b_i$ and $a_i$, computed from these parameters using the expressions in Table~\ref{tab:stability_full}. The parameters were chosen so that the $b_i$ coincide across the SG, GFM, and GFL. This isolates the effect of the denominator terms $a_2$ and $a_1$, which are unique to the GFL. The third block evaluates the swing coefficients $M(\delta_{\mathrm{eq}})$, $D(\delta_{\mathrm{eq}})$, and $K(\delta_{\mathrm{eq}})$ at both equilibria using~\eqref{eq:MDK} and the sensitivity sign in~\eqref{eq:SEP_UEP_partial}. For the SG, GFM, and GFL Case~I, all three coefficients are positive at $\delta_{\mathrm{eq},1}$ (stable), while $K(\delta_{\mathrm{eq},2}) < 0$ (unstable). For GFL Case~II, the sluggish-PLL parameters drive $M(\delta_{\mathrm{eq}})$ and $D(\delta_{\mathrm{eq}})$ negative at $\delta_{\mathrm{eq},2}$ along with $K$. This restores the sign agreement of~\eqref{eq:stability} and stabilizes the high-angle equilibrium, the distinctive behavior emphasized in the paper.

\begin{table}[t!]
\centering
\caption{Reduced-order model parameters and coefficients for unified swing-equation model for SG, GFM, and GFL units}
\label{tab:reduced_order_parameters_all_resources}
\renewcommand{\arraystretch}{1.5}
\scriptsize
\setlength{\tabcolsep}{3pt}
\begin{tabular}{@{}c c c c c c@{}}
\hline\hline
\textbf{Quantity}
& \textbf{Unit}
& \textbf{SG}
& \textbf{GFM}
& \textbf{\shortstack{\rule{0pt}{2ex}GFL\\Case I }}
& \textbf{\shortstack{GFL\\Case II}} \\
\hline
\multicolumn{6}{c}{\textbf{I. Unit and controller parameters}} \\
\hline
$\omega_{\mathrm{s}}$
& $\mathrm{rad/s}$ & $2\pi60$ & $2\pi60$ & $2\pi60$ & $2\pi60$ \\
$H$
& $\mathrm{s}$ & $4.0$ & -- & -- & -- \\
$D$
& -- & $40.0$ & -- & -- & -- \\
$m_{\mathrm{p}}$
& -- & -- & $0.025$ & $0.025$ & $0.025$ \\
$\omega_{\mathrm{c}}$
& $\mathrm{rad/s}$ & -- & $5.0$ & -- & -- \\
$\omega_{\mathrm{PLL}}$
& $\mathrm{rad/s}$ & -- & -- & $27.118$ & $2.165$ \\
$\zeta_{\mathrm{PLL}}$
& -- & -- & -- & $2.711$ & $0.216$ \\
$k_{\mathrm{p, PLL}}$
& $\mathrm{rad/s/pu}$ & -- & -- & $147.078$ & $0.938$ \\
$k_{\mathrm{i, PLL}}$
& $\mathrm{{rad/s^2/pu}}$ & -- & -- & $735.392$ & $4.690$ \\
\hline\hline
\multicolumn{6}{c}{\textbf{II. Angle-to-power transfer-function coefficients}} \\
\hline
$b_2 $
& $\mathrm{s^2}$
& $\tfrac{2H}{\omega_{\mathrm{s}}}$
& $\tfrac{1}{m_{\mathrm{p}}\omega_{\mathrm{c}}}$
& $\tfrac{2\zeta_{\mathrm{PLL}}}{m_{\mathrm{p}}\omega_{\mathrm{PLL}}}$
& $\tfrac{2\zeta_{\mathrm{PLL}}}{m_{\mathrm{p}}\omega_{\mathrm{PLL}}}$ \\
&
& $=0.021$ & $=0.021$ & $=0.021$ & $=0.021$ \\
$b_1$
& $\mathrm{s}$
& $\tfrac{D}{\omega_{\mathrm{s}}}$
& $\tfrac{1}{m_{\mathrm{p}}}$
& $\tfrac{1}{m_{\mathrm{p}}}$
& $\tfrac{1}{m_{\mathrm{p}}}$ \\
&
& $=0.106$ & $=0.106$ & $=0.106$ & $=0.106$ \\
\hline
$a_2$
& $\mathrm{s^2}$
& $0$
& $0$
& $\tfrac{1}{\omega_{\mathrm{PLL}}^2}$
& $\tfrac{1}{\omega_{\mathrm{PLL}}^2}$ \\
&
&
&
& $=0.0013$
& $=0.21$ \\
$a_1$
& $\mathrm{s}$
& $0$
& $0$
& $\tfrac{2\zeta_{\mathrm{PLL}}}{\omega_{\mathrm{PLL}}}$
& $\tfrac{2\zeta_{\mathrm{PLL}}}{\omega_{\mathrm{PLL}}}$ \\
&
&
&
& $=0.2$
& $=0.2$ \\
$a_0$
& -- & $1$ & $1$ & $1$ & $1$ \\
\hline\hline
\multicolumn{6}{c}{\textbf{III. Equivalent swing coefficients at the two equilibria}} \\
\hline
$M(\delta_{\mathrm{eq},1})$
& $\mathrm{s^2}$ & $0.021$ & $0.021$ & $0.034$ & $2.150$ \\
$D(\delta_{\mathrm{eq},1})$
& $\mathrm{s}$ & $0.106$ & $0.106$ & $2.103$ & $2.103$ \\
$K(\delta_{\mathrm{eq},1})$
& -- & $9.987$ & $9.987$ & $9.987$ & $9.987$ \\
\hline
$M(\delta_{\mathrm{eq},2})$
& $\mathrm{s^2}$ & $0.021$ & $0.021$ & $0.007$ & $-2.108$ \\
$D(\delta_{\mathrm{eq},2})$
& $\mathrm{s}$ & $0.106$ & $0.106$ & $-1.891$ & $-1.891$ \\
$K(\delta_{\mathrm{eq},2})$
& -- & $-9.987$ & $-9.987$ & $-9.987$ & $-9.987$ \\
\hline\hline
\end{tabular}
\end{table}

\subsection{Detailed-model Parameters}
\label{sec:detailed_parameters}
Tables~\ref{tab:sg_detailed}--\ref{tab:gfl_detailed} list the parameters of the
higher-order unit models linearized to produce the eigenvalue plots in
Fig.~\ref{fig:stability}: the $19\mathrm{th}$-order SG, the $14\mathrm{th}$-order GFM, and the
$12\mathrm{th}$-order GFL. All quantities are in per unit on the $100\, \mathrm{MVA}$, $60\, \mathrm{Hz}$ base frequency.
For the GFL, only the PLL gains differ between Case~I (fast PLL) and Case~II (sluggish PLL); every other entry is shared. Detailed models and associated parameters can be found in~\cite{lara2024powersimulationsdynamicsjlopensource}.

\begin{table}[t!]
\centering
\caption{Detailed SG model parameters} 
\label{tab:sg_detailed}
\renewcommand{\arraystretch}{1.2}
\scriptsize
\setlength{\tabcolsep}{3pt}
\begin{tabular}{c p{4.5cm} c}
\hline\hline
\textbf{Symbol} & \textbf{Description} & \textbf{Value} \\
\hline
\multicolumn{3}{c}{\textbf{Round-rotor machine}} \\
\hline
$R_\mathrm{s}$   & Stator (armature) resistance               & $0.020~\mathrm{pu}$ \\
$X_\ell$         & Stator leakage reactance                   & $0.200~\mathrm{pu}$ \\
$X_\mathrm{d}$   & $d$-axis synchronous reactance             & $1.800~\mathrm{pu}$ \\
$X_\mathrm{q}$   & $q$-axis synchronous reactance             & $1.700~\mathrm{pu}$ \\
$X_\mathrm{d}'$  & $d$-axis transient reactance               & $0.300~\mathrm{pu}$ \\
$X_\mathrm{q}'$  & $q$-axis transient reactance               & $0.550~\mathrm{pu}$ \\
$X_\mathrm{d}''$ & $d$-axis subtransient reactance ($=X_\mathrm{q}''$) & $0.250~\mathrm{pu}$ \\
$T_\mathrm{d0}'$ & $d$-axis transient open-circuit time const.  & $6.00~\mathrm{s}$ \\
$T_\mathrm{q0}'$ & $q$-axis transient open-circuit time const.  & $0.80~\mathrm{s}$ \\
$T_\mathrm{d0}''$& $d$-axis subtransient open-circuit time const. & $0.050~\mathrm{s}$ \\
$T_\mathrm{q0}''$& $q$-axis subtransient open-circuit time const. & $0.035~\mathrm{s}$ \\
\hline
\multicolumn{3}{c}{\textbf{Single-mass shaft}} \\
\hline
$H$ & Inertia constant            & $4.0~\mathrm{s}$ \\
$D$ & Mechanical damping constant & $40.0$ \\
\hline
\multicolumn{3}{c}{\textbf{Type-I AVR}} \\
\hline
$K_\mathrm{a}$ & Regulator (amplifier) gain        & $1.0$ \\
$K_\mathrm{e}$ & Exciter self-excitation constant  & $1.0$ \\
$K_\mathrm{f}$ & Rate feedback gain                & $0.0$ \\
$T_\mathrm{a}$ & Amplifier time constant           & $0.015~\mathrm{s}$ \\
$T_\mathrm{e}$ & Exciter time constant             & $0.50~\mathrm{s}$ \\
$T_\mathrm{f}$ & Rate feedback time constant       & $1.0~\mathrm{s}$ \\
$T_\mathrm{r}$ & Voltage transducer time constant  & $0.010~\mathrm{s}$ \\
$V_\mathrm{a}^{\lim}$ & Regulator output limits    & $(-10,\,10)~\mathrm{pu}$ \\
$A_\mathrm{e},B_\mathrm{e}$ & Exciter saturation constants & $(0.001,\,1.0)$ \\
\hline
\multicolumn{3}{c}{\textbf{Type-I turbine-governor}} \\
\hline
$R$ & Governor speed droop                 & $0.045~\mathrm{pu}$ \\
$T_\mathrm{s}$ & Speed-relay time constant   & $0.50~\mathrm{s}$ \\
$T_\mathrm{c}$ & Servomotor time constant    & $0.001~\mathrm{s}$ \\
$T_3,T_4,T_5$ & Transient/reheat time const. & $0.001~\mathrm{s}$ \\
$v^{\lim}$ & Valve position limits           & $(-5,\,5)~\mathrm{pu}$ \\
\hline
\multicolumn{3}{c}{\textbf{IEEEST power system stabilizer}} \\
\hline
$K_\mathrm{s}$ & Stabilizer gain               & $10.0$ \\
$T_1,T_3$ & Lead time constants               & $0.25~\mathrm{s}$ \\
$T_2,T_4$ & Lag time constants                & $0.02~\mathrm{s}$ \\
$T_5$ & Washout (reset) time constant         & $1.0~\mathrm{s}$ \\
$T_6$ & Input filter time constant            & $0.08~\mathrm{s}$ \\
$L_\mathrm{s}^{\lim}$ & Stabilizer output limits & $(-0.05,\,0.05)~\mathrm{pu}$ \\
$V_\mathrm{cu},V_\mathrm{cl}$ & Terminal-voltage cutoffs & $(\pm1.25)~\mathrm{pu}$ \\
\hline\hline
\end{tabular}
\end{table}

\begin{table}[t!]
\centering
\caption{Detailed GFM model parameters}
\label{tab:gfm_detailed}
\renewcommand{\arraystretch}{1.2}
\scriptsize
\setlength{\tabcolsep}{3pt}
\begin{tabular}{c p{4.5cm} c}
\hline\hline
\textbf{Symbol} & \textbf{Description} & \textbf{Value} \\
\hline
\multicolumn{3}{c}{\textbf{Outer control: virtual inertia (active power)}} \\
\hline
$T_\mathrm{a}$     & Virtual inertia constant ($\equiv 2H_\mathrm{vsm}$) & $8.0~\mathrm{s}$ \\
$k_\mathrm{d}$     & Frequency-damping gain   & $19.623$ \\
$k_\mathrm{\omega}$& Frequency-droop gain     & $20.377$ \\
\hline
\multicolumn{3}{c}{\textbf{Outer control: reactive-power droop}} \\
\hline
$k_\mathrm{q}$  & Reactive ($Q$-$V$) droop gain        & $0.20~\mathrm{pu}$ \\
$\omega_\mathrm{f}$ & $Q$ measurement filter cutoff   & $1000~\mathrm{rad/s}$ \\
\hline
\multicolumn{3}{c}{\textbf{Inner voltage/current control}} \\
\hline
$k_\mathrm{pv}$  & Voltage-loop proportional gain & $1.0$ \\
$k_\mathrm{iv}$  & Voltage-loop integral gain     & $1000$ \\
$k_\mathrm{pc}$  & Current-loop proportional gain & $3.0$ \\
$k_\mathrm{ic}$  & Current-loop integral gain     & $600$ \\
$\omega_\mathrm{ad}$ & Active-damping filter cutoff & $2000~\mathrm{rad/s}$ \\
$k_\mathrm{ad}$  & Active-damping gain            & $2.0$ \\
\hline
\multicolumn{3}{c}{\textbf{$LCL$ filter}} \\
\hline
$l_\mathrm{f}$ & Converter-side inductance & $0.080~\mathrm{pu}$ \\
$r_\mathrm{f}$ & Converter-side resistance & $0.003~\mathrm{pu}$ \\
$c_\mathrm{f}$ & Shunt capacitance         & $0.074~\mathrm{pu}$ \\
$l_\mathrm{g}$ & Grid-side inductance      & $0.200~\mathrm{pu}$ \\
$r_\mathrm{g}$ & Grid-side resistance      & $0.010~\mathrm{pu}$ \\
\hline\hline
\end{tabular}
\end{table}

\begin{table}[h!]
\centering
\caption{Detailed GFL model parameters}
\label{tab:gfl_detailed}
\renewcommand{\arraystretch}{1.2}
\scriptsize
\setlength{\tabcolsep}{3pt}
\begin{tabular}{c p{3.0cm} c c}
\hline\hline
\textbf{Symbol} & \textbf{Description} & \textbf{Case I} & \textbf{Case II} \\
\hline
\multicolumn{4}{c}{\textbf{Outer control: active-power PI}} \\
\hline
$K_\mathrm{p, p}$ & Active-power prop. gain & \multicolumn{2}{c}{$2.0$} \\
$K_\mathrm{i, p}$ & Active-power int. gain  & \multicolumn{2}{c}{$30$} \\
$\omega_\mathrm{z, p}$ & $P$ measurement cutoff & \multicolumn{2}{c}{$500~\mathrm{rad/s}$} \\
\hline
\multicolumn{4}{c}{\textbf{Outer control: reactive-power PI}} \\
\hline
$K_\mathrm{p, q}$ & Reactive-power prop. gain & \multicolumn{2}{c}{$2.0$} \\
$K_\mathrm{i, q}$ & Reactive-power int. gain  & \multicolumn{2}{c}{$30$} \\
$\omega_\mathrm{f, q}$ & $Q$ measurement cutoff & \multicolumn{2}{c}{$500~\mathrm{rad/s}$} \\
\hline
\multicolumn{4}{c}{\textbf{Inner current control}} \\
\hline
$k_\mathrm{pc}$  & Current-loop prop. gain & \multicolumn{2}{c}{$3.0$} \\
$k_\mathrm{ic}$  & Current-loop int. gain  & \multicolumn{2}{c}{$600$} \\
$k_\mathrm{ffv}$ & Voltage feed-forward gain & \multicolumn{2}{c}{$0.0$} \\
\hline
\multicolumn{4}{c}{\textbf{PLL and FFR gains}} \\
\hline
$m_\mathrm{p}$  & Droop gain & \multicolumn{2}{c}{$0.025$}\\
$\omega_\mathrm{PLL}$ & PLL bandwidth & \multicolumn{2}{c}{$27.11~\mathrm{rad/s}$} \\
$k_\mathrm{p, PLL}$ & PLL proportional gain & $147.078$ & $0.938$ \\
$k_\mathrm{i, PLL}$ & PLL integral gain     & $735.392$ & $4.690$ \\
\hline
\multicolumn{4}{c}{\textbf{$LCL$ filter}} \\
\hline
$l_\mathrm{f}$ & Converter-side inductance & \multicolumn{2}{c}{$0.080~\mathrm{pu}$} \\
$r_\mathrm{f}$ & Converter-side resistance & \multicolumn{2}{c}{$0.003~\mathrm{pu}$} \\
$c_\mathrm{f}$ & Shunt capacitance         & \multicolumn{2}{c}{$0.074~\mathrm{pu}$} \\
$l_\mathrm{g}$ & Grid-side inductance      & \multicolumn{2}{c}{$0.200~\mathrm{pu}$} \\
$r_\mathrm{g}$ & Grid-side resistance      & \multicolumn{2}{c}{$0.010~\mathrm{pu}$} \\
\hline\hline
\end{tabular}
\end{table}

\bibliographystyle{IEEEtran}

\end{document}